\newtheorem{theorem}{Theorem}
\newtheorem{lemma}{Lemma}
\newtheorem{assumption}{Assumption}
\newtheorem{proposition}{Proposition}
\newtheorem{definition}{Definition}
\journal{Automatica}
\newcommand{\x}[3][]{%
  \ensuremath{x_{[#2,#3]}^{#1}}%
}
\renewcommand{\u}[3][]{%
  \ensuremath{u_{[#2,#3]}^{#1}}%
}
\DeclareMathOperator*{\interior}{int}
\DeclareMathOperator*{\nullM}{nullity}
\begin{document}

\begin{frontmatter}
%\runtitle{Insert a suggested running title}  % Running title for regular 
                                              % papers but only if the title  
                                              % is over 5 words. Running title 
                                              % is not shown in output.

\title{Online Inverse Optimal Control for Control-Constrained Discrete-Time Systems on Finite and Infinite Horizons\thanksref{footnoteinfo}} % Title, preferably not more than 10 words.

\thanks[footnoteinfo]{%
Early versions of some results in this paper were presented at the IEEE CDC 2018 conference.\\
This work was supported by the Queensland Government Department of Science, Information Technology and Innovation (DSITI) and Boeing Research And Technology - Australia (BR\&T-A) through an Advance Queensland Research Fellowship to T.~L.~Molloy.\\
J.~Ford acknowledges continued support from the Queensland University of Technology (QUT) through the Centre for Robotic.\\
T.~L.~Molloy and T.~Perez were with the School of Electrical Engineering and Robotics, Queensland University of Technology (QUT), Brisbane, QLD 4000, Australia.\\
Corresponding author T.~L.~Molloy.}
\author[melb,qut]{Timothy L.~Molloy}\ead{tim.molloy@unimelb.edu.au},    % Add the 
\author[qut]{Jason J.~Ford}\ead{j2.ford@qut.edu.au}, and   % (ead) as shown
\author[qut]{Tristan~Perez}%\ead{tristan.perez93@gmail.com}  % (ead) as shown

\address[melb]{Department of Electrical and Electronic Engineering, University of Melbourne, Parkville, VIC 3010, Australia}  % Please supply full addresses here
\address[qut]{School of Electrical Engineering and Robotics, Queensland University of Technology (QUT), Brisbane, QLD 4000, Australia} 
%\address[brta]{Boeing Research and Technology Australia, St Lucia, QLD 4000, Australia}  % Please supply full addresses here
          
\begin{keyword}                           % Five to ten keywords,
Optimal control, Discrete-time systems, Nonlinear systems               % chosen from the IFAC 
\end{keyword}                             % keyword list or with the 
                                          % help of the Automatica 
                                          % keyword wizard

\begin{abstract} % Abstract of not more than 200 words.
In this paper, we consider the problem of computing parameters of an objective function for a discrete-time optimal control problem from state and control trajectories with active control constraints.
We propose a novel method of inverse optimal control that has a computationally efficient online form in which pairs of states and controls from given state and control trajectories are processed sequentially without being stored or processed in batches.
We establish conditions guaranteeing the uniqueness of the objective-function parameters computed by our proposed method from trajectories with active control constraints.
We illustrate our proposed method in simulation.
\end{abstract}
\end{frontmatter}

%\begin{figure}
%\begin{center}
%%\includegraphics[height=4cm]{jcaesar.eps}    % The printed column  
%\caption{Gaius Julius Caesar, 100--44 B.C.}  % width is 8.4 cm.
%\label{fig1}                                 % Size the figures 
%\end{center}                                 % accordingly.
%\end{figure}

%Despite the growing potential of inverse optimal control methods, most existing approaches assume knowledge of the optimisation horizon, are unable to handle systems with control constraints, and store and process the given state and control trajectories in their entirety offline.

\section{Introduction}
\label{section:introduction}
Many applications in control engineering \cite{Keshavarz2011,Panchea2017,Molloy2016,Molloy2018,Yokoyama2017,Maillot2013}, economics \cite{Konstantakopoulos2017}, and robotics \cite{Mombaur2010,Levine2012,Puydupin2012} involve inferring the underlying objectives of agents and systems from their behaviours.
Inverse optimal control (or inverse reinforcement learning) is a promising methodology for computing the objectives of control systems from given state and control trajectories, and its recent applications include learning driving styles \cite{Kuderer2015}, modelling human movement \cite{Mombaur2010}, and inferring the intent of aircraft \cite{Yokoyama2017}.
Motivated by applications involving systems subject to control constraints and operating indefinitely in real-time (such as vehicles), in this paper we propose a novel method of inverse optimal control for when the optimisation horizon is unknown, the controls are subject to constraints, and the given trajectories are to be processed recursively online.

Inverse optimal control is the problem of computing the (unknown) parameters of an optimal control problem's objective function such that given state and control trajectories are optimal (see \cite{Molloy2018,Keshavarz2011,Johnson2013,Pauwels2014} and references therein).
In contrast, the standard problem of (forward) optimal control involves finding optimal state and control trajectories given complete knowledge of the objective function.
The solution of (forward) optimal control problems with state and/or control constraints has received considerable recent attention, and a variety of efficient solution techniques now exist including the exact penalty method \cite{Li2011,Li2013} and the constraint transcription method \cite{Liu2014,Liu2017,Li2011a} (see also \cite{Yang2016} for a summary of the implementation and use of these and other techniques).
In these methods, constrained (forward) optimal control problems that are difficult (or intractable) to solve analytically are solved by employing novel control parameterisation schemes that parameterise the optimal controls as combinations of basis functions.
Despite the recent success in solving constrained (forward) optimal control problems, the inverse optimal control problem has received considerably less attention in settings where the states and controls may be subject to constraints and the horizon is unknown and potentially infinite.

%There are therefore an abundance of techniques for solving optimal control problems in both discrete and continuous time and with either finite or infinite horizons, and in settings with control constraints \cite{Goodwin2006}.

Under the assumption that the horizon is known and finite, methods of inverse optimal control have been proposed on the basis of bilevel optimisation \cite{Mombaur2010}, Karush-Kuhn-Tucker (KKT) conditions \cite{Keshavarz2011,Puydupin2012,Jin2018}, Pontryagin's minimum principle \cite{Molloy2018,Molloy2016,Johnson2013}, and the Hamilton-Jacobi-Bellman equation \cite{Pauwels2014}.
Few of these methods are directly applicable in discrete-time settings when the given trajectories contain active control constraints.
For example, neither the recent methods nor performance guarantees of \cite{Molloy2018} and \cite{Jin2018} are applicable when the given trajectories contain active control constraints.
Furthermore, the majority of these finite-horizon inverse optimal control methods (including those in \cite{Molloy2018} and \cite{Jin2018}) store and process the given trajectories in batches or in their entirety.
They therefore lack efficient online implementations and their memory and computational complexities increase with the length of the given trajectories.

% Infinite Horizon LQR
Methods of inverse optimal control have also been proposed under the assumption that the horizon is infinite \cite{Priess2015,Keshavarz2011,Molloy2018b,Boyd1994,Kalman1964}.
As in the finite-horizon case, few (if any) of these inverse methods are applicable when the given trajectories contain active control constraints.
Indeed, most existing infinite-horizon inverse methods are limited to unconstrained linear systems with quadratic objective functions \cite{Priess2015,Keshavarz2011,Molloy2018b,Boyd1994,Kalman1964}.
For example, the infinite-horizon method of \cite{Priess2015} is wholly reliant on this linear-quadratic structure since it involves computing a feedback gain matrix and then computing the objective-function parameters by solving a system of linear matrix inequalities (see \cite[Section 10.6]{Boyd1994} and references therein).
Similarly, the approach of \cite{Keshavarz2011} assumes a quadratic form of the objective function and relies on the very restrictive assumption that the stage function of the optimal control problem is known.
Despite these efforts and our recent proposal of a method of infinite-horizon inverse optimal control for discrete-time unconstrained nonlinear systems in \cite{Molloy2018b}, the problem of control-constrained inverse optimal control remains largely unresolved in both finite and infinite horizon settings.

The key contribution of this paper is the proposal of a novel method of online discrete-time inverse optimal control that computes objective-function parameters from trajectories with control constraints.
A secondary contribution of this paper is the establishment of conditions under which our proposed online method is guaranteed to compute unique objective-function parameters.
We develop our method and performance guarantees by establishing a new combined discrete-time minimum principle for both finite and infinite horizon optimal control problems that involves a forward recursion for the costates (rather than the backward recursions present in prior art, cf.~\cite{Molloy2018,Blot2000,Bertsekas2005,Goodwin2006}).
By exploiting this combined minimum principle, our method and performance guarantees are applicable to both finite and infinite horizon problems with constrained controls without requiring explicit knowledge of the horizon.
In contrast, the recent treatments of discrete-time inverse optimal control in \cite{Molloy2018, Jin2018, Molloy2018b} are specialised to either finite or infinite horizon settings and are only applicable to trajectories without control constraints.
Thus, in the finite-horizon setting, our method contrasts with those of \cite{Molloy2018} and \cite{Jin2018} by handling trajectories with constrained controls, having an efficient online implementation, and not requiring prior knowledge of the horizon.
In the infinite-horizon setting, in contrast to the method of \cite{Molloy2018b}, our method handles trajectories with constrained controls.

%Our proposed method also has a computationally efficient online form in which pairs of states and controls from given trajectories are processed sequentially without being stored or processed in batches.

This paper is structured as follows.
In Section \ref{sec:problem}, we pose online inverse optimal control.
In Section \ref{sec:ioc}, we develop a combined minimum principle for both finite and infinite horizon discrete-time optimal control problems and propose our novel method of online inverse optimal control.
In Section \ref{sec:performance} we establish performance guarantees for our method.
We present an illustrative example in Section \ref{sec:examples} and provide conclusions in Section \ref{sec:conclusion}.

\section{Problem Formulation}
\label{sec:problem}
Let us consider the discrete-time deterministic system
\begin{equation}
 \label{eq:dynamics}
 x_{k+1}
 = f_k \left( x_k, u_k \right),
 \quad  x_0 \in \mathbb{R}^n
\end{equation}
for $k \geq 0$ where $f_k : \mathbb{R}^n \times \mathcal{U} \mapsto \mathbb{R}^n$ are continuously differentiable (possibly nonlinear) functions, $x_k \in \mathbb{R}^n$ are state vectors, and $u_k \in \mathcal{U}$ are (potentially multidimensional) control variables belonging to the closed and convex constraint set $\mathcal{U} \subset \mathbb{R}^m$.
Let us define the objective function
\begin{equation}
 \label{eq:ocCost}
 V \left( \x{0}{K}, \u{0}{K}, \theta \right)
 \triangleq \sum_{k = 0}^{K} \theta' L_k\left(x_k, u_k \right)
\end{equation}
with the possibly infinite horizon $K > 0$ where $\theta \in \Theta$ is a time-invariant parameter vector from the parameter set $\Theta \subset \mathbb{R}^N$, and $L_k : \mathbb{R}^n \times \mathbb{R}^m \mapsto \mathbb{R}^N$ for $k \geq 0$ are basis functions that are continuously differentiable in both of their arguments.
We shall use $^\prime$ to denote the transpose operator, $\x{0}{K}$ to denote the state sequence $\{x_k : 0 \leq k \leq K\}$ and $\u{0}{K}$ to denote the control sequence $\{u_k : 0 \leq k \leq K\}$.
In the (well-posed) discrete-time optimal control problem, we solve
\begin{align}
 \label{eq:ocProblem}
 \begin{aligned}
  &\underset{\u{0}{K}}{{\inf}}  & & V \left( \x{0}{K}, \u{0}{K}, \theta \right) < \infty\\
  &\mathrm{s.t.} & & x_{k+1} = f_k (x_k, u_k), \quad k \geq 0 \\
  & & & u_k \in \mathcal{U}, \quad k \geq 0\\
  & & & x_{0} \in \mathbb{R}^n
 \end{aligned}
\end{align}
for the optimal state $\x{0}{K}$ and control $\u{0}{K}$ trajectories given knowledge of the possibly infinite horizon $K$, the dynamics $f_k$, the constraint set $\mathcal{U}$, the time-invariant parameter vector $\theta$, and the basis functions $L_k$.

In this paper, we consider the problem of inverse optimal control in which we seek to compute parameter vector $\theta \in \Theta$ of the objective function \eqref{eq:ocCost} such that a (possibly infinite) pair of state and control trajectories $\x{0}{K}$ and $\u{0}{K}$ constitute an optimal solution to the optimal control problem \eqref{eq:ocProblem}.
We shall specifically consider a novel \emph{online} inverse optimal control problem in which we seek to compute the parameter vector $\theta$ from a sequence of state and control pairs $(x_k,u_k)$ drawn from the (possibly infinite) trajectories $\x{0}{K}$ and $\u{0}{K}$ without storing and processing the pairs in batches.
In this inverse optimal control problem, we assume that we have knowledge of the dynamics $f_k$, basis functions $L_k$, and constraint set $\mathcal{U}$.
We note that in contrast to previous formulations of discrete-time inverse optimal control (cf.~\cite{Molloy2018}), our online inverse optimal control problem assumes no prior knowledge of the (possibly infinite) horizon $K$ and prohibits the storage of the trajectories $\x{0}{K}$ and $\u{0}{K}$.

%Our aim is to compute parameters $\theta \in \Theta$ of the objective function \eqref{eq:ocCost} online from the state and control pairs $(x_k,u_k)$ such that the (possibly infinite) trajectories $\x{0}{K}$ and $\u{0}{K}$ from which the pairs are given constitute an optimal solution to the optimal control problem \eqref{eq:ocProblem}.

On occasion in this paper, we shall make use of the following assumption to differentiate between cases where the trajectories $\x{0}{K}$ and $\u{0}{K}$ constitute a solution to the optimal control problem \eqref{eq:ocProblem} for some $\theta = \theta^* \in \Theta$, and cases where they do not constitute a solution to \eqref{eq:ocProblem} for any $\theta = \theta^* \in \Theta$.

\begin{assumption}[Forward Optimality]
 \label{assumption:trueParameters}
 The trajectories $\x{0}{K}$ and $\u{0}{K}$ constitute a solution to \eqref{eq:ocProblem} with dynamics $f_k$, basis functions $L_k$, constraint set $\mathcal{U}$, unknown (possibly infinite) horizon $K$, and unknown unique objective-function parameter vector $\theta = \theta^* \in \Theta$.
\end{assumption}

In this paper, we also seek to investigate conditions under which our inverse optimal control problem has a unique solution (especially under Assumption 1).
As a first step towards establishing these conditions, we note that scaling the objective function $V$ of the optimal control problem \eqref{eq:ocProblem} by any $r > 0$ does not change the nature of the optimising trajectories $\x{0}{K}$ and $\u{0}{K}$ but does scale the minimum value of the objective function $V$.
Thus, an immediate condition necessary (though not sufficient) for our inverse optimal control problem to possess a unique solution is that parameter set $\Theta$ must not contain $\theta = \theta^*$ and $\theta = r\theta^*$ for any $r > 0$ and any $\theta^*$.
In this paper we follow existing approaches (cf.~\cite{Molloy2018,Molloy2016,Molloy2018b}), and consider the parameter set to be of the form $\Theta \triangleq \{ \theta \in \mathbb{R}^N : \theta^1 = a\}$ for some scalar $a > 0$.
We note that there is no loss of generality with this choice of parameter set and we expect results analogous to those of this paper to hold when the parameter set is instead constructed as the fixed-normalisation set $\Theta = \{ \theta \in \mathbb{R}^{N} : \|\theta\| = a\}$ as in \cite{Albrecht2011} (see also \cite{Molloy2018b} for a comparison of infinite-horizon inverse optimal control results with fixed-element and fixed-normalisation parameter sets).

\section{Online Inverse Optimal Control}
\label{sec:ioc}

In this section, we exploit minimum principles for both finite-horizon and infinite-horizon discrete-time optimal control problems to propose our novel method of online inverse optimal control.

\subsection{Finite and Infinite Horizon Minimum Principles}

To present the discrete-time minimum principles that we shall exploit, let us define the Hamiltonian function associated with the optimal control problem \eqref{eq:ocProblem} as
\begin{equation}
 \label{eq:hamDef}
 H_k \left( x_k, u_k, \lambda_{k+1}, \theta \right)
 \triangleq \theta' L_k\left( x_k, u_k \right) + \lambda_{k+1}' f_k \left( x_k, u_k \right)
\end{equation}
where $\lambda_k \in \mathbb{R}^n$ for $k \geq 0$ are costate (or adjoint) vectors.
Let us also define $\nabla_{x} H_k \left( x_k, u_k, \lambda_{k+1}, \theta \right) \in \mathbb{R}^n$ and $\nabla_{u} H_k \left( x_k, u_k, \lambda_{k+1}, \theta \right) \in \mathbb{R}^m$ as the column vectors of partial derivatives of the Hamiltonian with respect to $x_k$ and $u_k$, respectively, and evaluated at $x_k$, $u_k$, $\lambda_{k+1}$, and $\theta$.
We shall similarly use $\nabla_x f_k \in \mathbb{R}^{n \times n}$, $\nabla_u f_k \in \mathbb{R}^{m \times n}$, $\nabla_x L_k \in \mathbb{R}^{n \times N}$, and $\nabla_u L_k \in \mathbb{R}^{m \times N}$ to denote the matrices of partial derivatives of $f_k$ and $L_k$.
We also require the following assumption.

\begin{assumption}[Jacobian Invertibility]
 \label{assumption:invertable}
 The derivative matrix of the dynamics $\nabla_x f_k$ at $(x_k, u_k)$ is invertible for all $k \geq 0$.
\end{assumption}

%Assumption \ref{assumption:convex} is standard in the development of discrete-time minimum principles on both finite and infinite horizons (see \cite{Goodwin2006,Bertsekas2005,Blot2000} and references therein for discussions of convexity assumptions).
Assumption \ref{assumption:invertable} is potentially restrictive; for example, it corresponds to requiring the invertibility of the state transition matrix in linear systems.
However it has previously been used to establish both finite and infinite horizon discrete-time minimum principles (cf.~\cite{Blot2000} and \cite[Theorem 3.3.1]{Goodwin2006}).
We shall use Assumption \ref{assumption:invertable} to combine the finite-horizon minimum principle of \cite[Proposition 3.3.2]{Bertsekas2005} with the infinite-horizon minimum principle of \cite[Theorem 2]{Blot2000}.
Before we present this combined minimum principle, let us introduce the following definition.

\begin{definition}[Inactive Constraint Times]
Given the controls $u_k$ for $k \geq 0$, we shall define the \emph{inactive constraint times} up to and including some time $\ell \geq 0$ as the set of times
\begin{equation*}
    \mathcal{K}_\ell 
    \triangleq \{0 \leq k \leq \ell : u_k \in \interior \mathcal{U} \}
\end{equation*}
where $u_k \in \interior \mathcal{U}$ denotes that the control $u_k$ is in the interior (i.e., not on the boundary) of the control constraint set $\mathcal{U}$.
\end{definition}

We now present our combined finite and infinite horizon discrete-time minimum principle.

\begin{lemma}
 \label{lemma:minP}
 If Assumptions \ref{assumption:trueParameters} and \ref{assumption:invertable} hold so that $\x{0}{K}$ and $\u{0}{K}$ constitute a solution to \eqref{eq:ocProblem} with $\theta = \theta^* \in \Theta$ and potentially infinite $K > 0$, then
\begin{equation}
 \label{eq:backwardsInduction}
 \lambda_k
 = \nabla_{x} H_k \left( x_k, u_k, \lambda_{k+1}, \theta \right)
\end{equation}
for all $0 \leq k \leq K$ with $\lambda_{K+1} = 0$ if $K < \infty$ and $\lambda_{K+1}$ undefined if $K = \infty$.
Furthermore,
\begin{equation}
 \label{eq:minPrinciple}
 \nabla_u H_k \left( x_k, u_k, \lambda_{k+1}, \theta \right)
 = 0
\end{equation}
for all $k \in \mathcal{K}_K$ where $\mathcal{K}_K$ are the \emph{inactive constraint times} up to and including time $K$.
\end{lemma}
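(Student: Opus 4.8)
The plan is to establish the two conclusions by invoking, case by case, an established discrete-time minimum principle and then specialising the resulting first-order conditions to the inactive constraint times. Under Assumption~\ref{assumption:trueParameters} the trajectories $\x{0}{K}$ and $\u{0}{K}$ are optimal for \eqref{eq:ocProblem} with $\theta = \theta^*$; moreover $f_k$ and $L_k$ are continuously differentiable, $\mathcal{U}$ is closed and convex, and (where needed) Assumption~\ref{assumption:invertable} supplies invertibility of $\nabla_x f_k$. These are exactly the hypotheses required to apply the finite-horizon discrete-time minimum principle of \cite[Proposition 3.3.2]{Bertsekas2005} when $K < \infty$ and the infinite-horizon discrete-time minimum principle of \cite[Theorem 2]{Blot2000} when $K = \infty$.

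In the case $K < \infty$, applying \cite[Proposition 3.3.2]{Bertsekas2005} to \eqref{eq:ocProblem} --- identified with a standard discrete-time optimal control problem having stage costs $\theta' L_k$, no terminal cost, and no terminal-state constraint, only the convex control constraint $u_k \in \mathcal{U}$ --- yields a costate sequence $\{\lambda_k\}_{0 \le k \le K+1}$ satisfying the adjoint recursion \eqref{eq:backwardsInduction} for $0 \le k \le K$ together with the transversality condition $\lambda_{K+1} = 0$, and, because $\mathcal{U}$ is convex, the variational inequality $\nabla_u H_k(x_k, u_k, \lambda_{k+1}, \theta)'(u - u_k) \ge 0$ for every $u \in \mathcal{U}$ and every $0 \le k \le K$. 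In the case $K = \infty$, well-posedness of \eqref{eq:ocProblem} gives $V(\x{0}{K}, \u{0}{K}, \theta^*) < \infty$, so with Assumption~\ref{assumption:invertable} in force \cite[Theorem 2]{Blot2000} furnishes a costate sequence $\{\lambda_k\}_{k \ge 0}$ satisfying \eqref{eq:backwardsInduction} for all $k \ge 0$ --- now with no terminal condition, consistent with $\lambda_{K+1}$ being undefined --- together with the same variational inequality on $\mathcal{U}$ at each $k$. Either way, \eqref{eq:backwardsInduction} and the variational inequality hold with a single common costate sequence.

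It remains to deduce \eqref{eq:minPrinciple}. Fix $k \in \mathcal{K}_K$, so that $u_k \in \interior \mathcal{U}$, and choose $\varepsilon > 0$ with $\{v \in \mathbb{R}^m : \|v - u_k\| \le \varepsilon\} \subseteq \mathcal{U}$. For an arbitrary unit vector $d \in \mathbb{R}^m$, the admissible perturbations $u = u_k \pm \varepsilon d \in \mathcal{U}$ in the variational inequality give $\pm \varepsilon\, \nabla_u H_k(x_k, u_k, \lambda_{k+1}, \theta)' d \ge 0$, hence $\nabla_u H_k(x_k, u_k, \lambda_{k+1}, \theta)' d = 0$; since $d$ is an arbitrary unit vector, this is \eqref{eq:minPrinciple}.

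The substance of the argument lies not in any calculation but in checking that the two cited minimum principles genuinely apply here: confirming that the presence of only convex control constraints (and no terminal-state constraint) keeps the finite-horizon principle in its normal form with zero terminal costate, and that finiteness of the cost together with Assumption~\ref{assumption:invertable} is precisely what \cite[Theorem 2]{Blot2000} requires, so that the finite- and infinite-horizon conclusions can be stated uniformly as \eqref{eq:backwardsInduction}--\eqref{eq:minPrinciple} with a single ``undefined-versus-zero'' distinction on $\lambda_{K+1}$. The reduction of the variational inequality to $\nabla_u H_k = 0$ at the inactive constraint times is then routine.
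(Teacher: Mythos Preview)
Your proof is correct and follows essentially the same route as the paper: the same case split on $K$, the same two citations (\cite[Proposition 3.3.2]{Bertsekas2005} for $K<\infty$ and \cite[Theorem 2]{Blot2000} for $K=\infty$) yielding the adjoint recursion and the variational inequality, followed by the observation that the variational inequality reduces to $\nabla_u H_k = 0$ at interior points. Your version simply spells out the $\pm\varepsilon d$ argument for that last reduction and the hypothesis-checking in more detail than the paper does.
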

\begin{proof}
    In the case $0 < K < \infty$, \cite[Proposition 3.3.2]{Bertsekas2005} establishes that $\lambda_k$ satisfies \eqref{eq:backwardsInduction} for $0 \leq k \leq K$ with $\lambda_{K+1} = 0$, and that
    \begin{equation}
        \label{eq:variationalIneq}
        \nabla_{u} H_k \left( x_k, u_k, \lambda_{k+1}, \theta \right)' \left( \bar{u} - u_k \right) \geq 0
    \end{equation}
    for all $\bar{u} \in \mathcal{U}$ and all $0 \leq k \leq K$.
    The variational inequality \eqref{eq:variationalIneq} simplifies to \eqref{eq:minPrinciple} at times $k \in \mathcal{K}_K$ proving the lemma assertion when $K$ is finite.
    
    In the case $K = \infty$, \cite[Theorem 2]{Blot2000} under Assumption \ref{assumption:invertable} establishes that $\lambda_k$ satisfies \eqref{eq:backwardsInduction} for $k \geq 0$ without a defined terminal or initial condition, and $u_k$ satisfies the variational inequality \eqref{eq:variationalIneq} for all $\bar{u} \in \mathcal{U}$ and all $k \geq 0$.
    Again, the variational inequality \eqref{eq:variationalIneq} simplifies to \eqref{eq:minPrinciple} at times $k \in \mathcal{K}_K$ proving the lemma assertion when $K$ is infinite and completing the proof.
\end{proof}

Lemma \ref{lemma:minP} describes the properties of the costates and the gradients of the Hamiltonian when the state $\x{0}{K}$ and control $\u{0}{K}$ trajectories constitute a solution to the optimal control problem \eqref{eq:ocProblem} with $\theta = \theta^* \in \Theta$ for any (possibly infinite) horizon $K > 0$.
%These properties are expressed in the form of the costate backward recursion \eqref{eq:backwardsInduction} and the Hamiltonian-gradient condition \eqref{eq:minPrinciple}.
%The key difference between the conditions for the finite and infinite horizon problems in Lemma \ref{lemma:minP} and the minimum principle for finite-horizon optimal control of \cite[Proposition 3.3.2]{Bertsekas2005} is that here we do not specify a terminal boundary condition for the costate backward recursion \eqref{eq:backwardsInduction}.
We note that the terminal boundary condition for $\lambda_{K+1}$ is only defined in the case of a finite horizon $K < \infty$, and no boundary or initial conditions are imposed on $\lambda_k$ in the case of an infinite horizon $K = \infty$ (consistent with the infinite-horizon minimum principle of \cite{Blot2000}).
In the follow theorem, we shall omit the terminal boundary condition $\lambda_{K+1} = 0$ when $K < \infty$ and use Assumption \ref{assumption:invertable} in order to convert the costate backward recursion \eqref{eq:backwardsInduction} to a forward recursion.
We will later exploit our forward recursion to propose our method of online inverse optimal control.

\begin{theorem}
\label{theorem:linearSystem}
 If Assumptions \ref{assumption:trueParameters} and \ref{assumption:invertable} hold so that \eqref{eq:ocProblem} is solved by $\x{0}{K}$ and $\u{0}{K}$ with $\theta = \theta^* \in \Theta$ and potentially infinite $K > 0$, then
  \begin{align}
 \label{eq:hamSystemeasier}
 F_k \mathcal{G}_k \alpha &= 0
\end{align}
for all $k \in \mathcal{K}_K$ where
$\alpha \triangleq [
  \theta' \;
  \lambda_0'
 ]'
$ and
\begin{align}
\label{eq:fMatrix}
 F_k
 \triangleq \begin{bmatrix}
  \nabla_u L_k && \nabla_u f_k
 \end{bmatrix}.
 \end{align}
 Here, $\mathcal{G}_k \triangleq \prod_{\ell = 0}^{k} G_{\ell}$ is given by the forward recursion
 \begin{align}
  \label{eq:gRecursion}
  \mathcal{G}_k
  &= G_{k} \times \mathcal{G}_{k-1}
 \end{align}
 for $k \geq 1$ with $\mathcal{G}_{0} = G_0$ and
 \begin{align}
 \label{eq:gMatrix}
 G_k 
 \triangleq
 \begin{bmatrix}
  I && 0 \\
  - \nabla_x f_k^{-1} \nabla_x L_k && \nabla_x f_k^{-1}
 \end{bmatrix} \in \mathbb{R}^{(n + N) \times (n + N)}.
 \end{align}
\end{theorem}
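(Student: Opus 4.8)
The plan is to convert the two conclusions of Lemma \ref{lemma:minP} into the single homogeneous linear relation \eqref{eq:hamSystemeasier} by (i) writing the costate relation \eqref{eq:backwardsInduction} explicitly in terms of the Jacobians $\nabla_x L_k$ and $\nabla_x f_k$, (ii) using Assumption \ref{assumption:invertable} to solve for $\lambda_{k+1}$ and thereby replace the backward costate recursion with a forward one for the augmented vector $[\,\theta'\ \lambda_k'\,]'$, (iii) unrolling this forward recursion from $k=0$ to express $[\,\theta'\ \lambda_{k+1}'\,]'$ as $\mathcal{G}_k\alpha$, and (iv) substituting into the stationarity condition \eqref{eq:minPrinciple}.

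Concretely, I would first differentiate the Hamiltonian \eqref{eq:hamDef} to obtain $\nabla_x H_k(x_k,u_k,\lambda_{k+1},\theta) = \nabla_x L_k\,\theta + \nabla_x f_k\,\lambda_{k+1}$ and $\nabla_u H_k(x_k,u_k,\lambda_{k+1},\theta) = \nabla_u L_k\,\theta + \nabla_u f_k\,\lambda_{k+1}$, with all Jacobians evaluated at $(x_k,u_k)$. The costate relation \eqref{eq:backwardsInduction} then reads $\lambda_k = \nabla_x L_k\,\theta + \nabla_x f_k\,\lambda_{k+1}$, and since Assumption \ref{assumption:invertable} makes $\nabla_x f_k$ invertible, this rearranges to
\[
 \lambda_{k+1} = -\nabla_x f_k^{-1}\nabla_x L_k\,\theta + \nabla_x f_k^{-1}\lambda_k ,
\]
which, together with the trivial identity $\theta = \theta$, is precisely $[\,\theta'\ \lambda_{k+1}'\,]' = G_k\,[\,\theta'\ \lambda_k'\,]'$ with $G_k$ as in \eqref{eq:gMatrix}. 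A straightforward induction on $k$, using $\mathcal{G}_0 = G_0$ and the recursion \eqref{eq:gRecursion}, then yields $[\,\theta'\ \lambda_{k+1}'\,]' = \mathcal{G}_k\,[\,\theta'\ \lambda_0'\,]' = \mathcal{G}_k\alpha$ for every $k\geq 0$; the product $\mathcal{G}_k = \prod_{\ell=0}^{k}G_\ell$ is well defined exactly because Assumption \ref{assumption:invertable} holds for all $\ell$.

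To finish, I would fix $k\in\mathcal{K}_K$. Lemma \ref{lemma:minP} gives $\nabla_u H_k(x_k,u_k,\lambda_{k+1},\theta) = 0$, i.e.\ $\nabla_u L_k\,\theta + \nabla_u f_k\,\lambda_{k+1} = 0$, which is $\begin{bmatrix}\nabla_u L_k & \nabla_u f_k\end{bmatrix}[\,\theta'\ \lambda_{k+1}'\,]' = F_k\,[\,\theta'\ \lambda_{k+1}'\,]' = 0$ with $F_k$ as in \eqref{eq:fMatrix}. Substituting $[\,\theta'\ \lambda_{k+1}'\,]' = \mathcal{G}_k\alpha$ gives $F_k\mathcal{G}_k\alpha = 0$, as claimed.

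I do not expect a serious obstacle here; the one point that needs care is that the argument must run uniformly whether $K$ is finite or infinite, and this is already delivered by Lemma \ref{lemma:minP}, which supplies \eqref{eq:backwardsInduction} for all $0\leq k\leq K$ and \eqref{eq:minPrinciple} on $\mathcal{K}_K$ in both regimes. In particular, the deliberate omission of the finite-horizon terminal condition $\lambda_{K+1}=0$ is what lets the same forward recursion --- anchored at the undetermined initial costate $\lambda_0$, which is therefore carried along inside $\alpha$ --- serve both cases, so that no terminal data is needed for \eqref{eq:hamSystemeasier} to hold.
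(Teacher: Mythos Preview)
Your proposal is correct and follows essentially the same approach as the paper: invert the costate recursion via Assumption~\ref{assumption:invertable} to obtain the forward update $[\theta'\ \lambda_{k+1}']' = G_k[\theta'\ \lambda_k']'$, unroll it to $\mathcal{G}_k\alpha$, and substitute into the stationarity condition $\nabla_u L_k\,\theta + \nabla_u f_k\,\lambda_{k+1}=0$ to obtain $F_k\mathcal{G}_k\alpha=0$. Your added remark about the terminal condition being deliberately dropped so that the same forward recursion serves both finite and infinite horizons is exactly the point the paper is making.
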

\begin{proof}
 The definition of the Hamiltonian \eqref{eq:hamDef} combined with the backward recursion \eqref{eq:backwardsInduction} from Lemma \ref{lemma:minP} holding under Assumption \ref{assumption:invertable} implies that
 \begin{align}
 \label{eq:forwardRecursion}
 \lambda_{k+1}
 &= \nabla_x f_k^{-1} \lambda_{k} - \nabla_x f_k^{-1} \nabla_x L_k \theta
 \end{align}
 for all $k \geq 0$ where we have noted that $\nabla_x f_k$ is invertible under Assumption \ref{assumption:invertable}.
 By defining $z_k \triangleq [ \theta' \;  \lambda_k']'$ for $k \geq 0$ and recalling the definitions of $G_k$ and $\mathcal{G}_k$, \eqref{eq:forwardRecursion} may be rewritten as the forward recursion
 \begin{align}\label{eq:recursive}
 z_{k+1}
 = G_k z_k
 = \mathcal{G}_{k} \alpha
 \end{align}
 for $k \geq 0$ where we note that $z_0 = \alpha$.
 Similarly, applying the definition of the Hamiltonian \eqref{eq:hamDef} to \eqref{eq:minPrinciple} under Assumption \ref{assumption:invertable} gives
 \begin{align*}
 0 
 &= \nabla_u L_k \theta + \nabla_u f_k \lambda_{k+1}\\
 %&= \begin{bmatrix} \nabla_u L_k && \nabla_u f_k \end{bmatrix} z_{k+1}\\
 &= F_k \mathcal{G}_{k} \alpha
 \end{align*}
 for $k \in \mathcal{K}_K$ where the last line follows from \eqref{eq:recursive}.
 The theorem assertion follows and the proof is complete.
\end{proof}

The matrix equation \eqref{eq:hamSystemeasier} summarises both the costate \eqref{eq:backwardsInduction} and Hamiltonian-gradient \eqref{eq:minPrinciple} conditions of Lemma \ref{lemma:minP}.
By rewriting the costate backward recursion \eqref{eq:backwardsInduction} as a forward recursion, we have eliminated the costate vectors $\lambda_k$ for $k \geq 1$ from \eqref{eq:hamSystemeasier}.
When Assumption \ref{assumption:trueParameters} holds, we may thus, in principle, solve \eqref{eq:hamSystemeasier} at any time $k \in \mathcal{K}_K$ for the vector $\alpha$ which will yield values of the parameter vector $\theta$ and initial costates $\lambda_0$.
However, in practice the matrices $F_k$ and $\mathcal{G}_k$ may be rank deficient and the equality in \eqref{eq:hamSystemeasier} may not hold exactly due to violation of Assumption \ref{assumption:trueParameters}; for example, the given trajectories may not be optimal for any $\theta \in \Theta$ due to misspecified dynamics or basis functions.
To handle these situations, we shall next propose an inverse optimal control method by considering sums of squared residuals $\| F_k \mathcal{G}_k \alpha \|^2$.

\subsection{Proposed Online Inverse Optimal Control Method}
To propose our online inverse optimal control method, let us consider the \emph{inactive constraint times} up to and including time $k \geq 0$, namely, $\mathcal{K}_k$.
Under Assumption \ref{assumption:invertable}, let us also define
\begin{align}\notag
 J_k \left( \alpha \right)
 &\triangleq \sum_{\ell \in \mathcal{K}_k} \left\| F_\ell \mathcal{G}_\ell \alpha \right\|^2\\ \label{eq:ocost}
 &= \alpha' \mathcal{Q}_k \alpha 
\end{align}
as the sum of squared residuals of \eqref{eq:hamSystemeasier} where
\begin{align*}
    \mathcal{Q}_k 
    &\triangleq \sum_{\ell \in \mathcal{K}_k} \left( F_\ell \mathcal{G}_\ell \right)'\left(F_\ell \mathcal{G}_\ell \right)
\end{align*}
is a symmetric positive semidefinite matrix.
Our proposed method of online inverse optimal control is then to find vectors $\hat{\alpha}_k$ at each time $k \geq 0$ that solve the optimisation problem
\begin{align}
 \label{eq:method}
 \begin{aligned}
  &\inf_{\alpha}  & & J_k \left( \alpha \right)
  &\mathrm{s.t.} & & \mathcal{I}\alpha \in \Theta
 \end{aligned}
\end{align}
where $\mathcal{I} \triangleq [I \; 0 ] \in \mathbb{R}^{N \times (N + n)}$.
The objective-function parameter vector $\hat{\theta}_k$ and initial costates $\hat{\lambda}_0$ computed by our method are then given by $\hat{\theta}_k = \mathcal{I} \hat{\alpha}_k$ and $\hat{\lambda}_0 = \bar{\mathcal{I}} \hat{\alpha}_k$ where $\bar{\mathcal{I}} \triangleq [ 0 \; I ] \in \mathbb{R}^{n \times (N + n)}$.

Our method \eqref{eq:method} has an online form in the sense that it can process the pairs $(x_k,u_k)$ for $k \geq 0$ sequentially since $\mathcal{Q}_k$ is given by the recursion
\begin{align}
   \label{eq:qRecursion}
    \mathcal{Q}_k
    &= \begin{cases}
    	\mathcal{Q}_{k-1} + \left( F_k \mathcal{G}_k \right)'\left(F_k \mathcal{G}_k \right) & \text{if } u_k \in \interior \, \mathcal{U},\\
	\mathcal{Q}_{k-1} & \text{otherwise}
	\end{cases}
\end{align}
for $k \geq 0$ where $\mathcal{Q}_{-1} \triangleq 0$
% \begin{align*}
%  \mathcal{Q}_0
%  	\triangleq \begin{cases}
% 		(F_0\mathcal{G}_0)'(F_0\mathcal{G}_0) & \text{if } u_0 \in \interior \, \mathcal{U},\\
% 		0 & \text{otherwise.}
% 	\end{cases}
% \end{align*}
and $\mathcal{G}_k$ is given by the recursion \eqref{eq:gRecursion}.
Furthermore, the dimensionality of the optimisation in our method is $N + n$ and does not grow with time.
In contrast, the dimensionality of the optimisation problems in existing minimum principle and KKT methods of inverse optimal control grow linearly with the length of the trajectories considered since, for example, they involve optimisation over the entire costate trajectory $\lambda_0, \lambda_1, \ldots, \lambda_k$ \cite{Molloy2018,Keshavarz2011,Puydupin2012,Jin2018}.
The key to the constant dimensionality and efficient online form of our method is the forward-recursive expression of the finite and infinite horizon minimum principles established in Theorem \ref{theorem:linearSystem}.

By minimising the residual cost function \eqref{eq:ocost} over $\alpha$, our method \eqref{eq:method} computes a parameter vector $\hat{\theta}_k$ and initial costates $\hat{\lambda}_0$ that minimise the violation of the minimum principle conditions \eqref{eq:hamSystemeasier}.
If Assumption \ref{assumption:trueParameters} holds so that $\x{0}{K}$ and $\u{0}{K}$ are a solution to the optimal control problem \eqref{eq:ocProblem} for $\theta = \theta^* \in \Theta$, then $\hat{\alpha}_k = [\theta^{*\prime} \; \lambda_0']'$ will be one (possibly nonunique) solution to \eqref{eq:method} for all $k \geq 0$.
If Assumption \ref{assumption:trueParameters} does not hold so that $\x{0}{K}$ and $\u{0}{K}$ are suboptimal under \eqref{eq:ocCost} for all $\theta \in \Theta$ (e.g., due to noise or misspecified basis functions and dynamics), then our method \eqref{eq:method} will yield parameters $\hat{\theta}_k$ that minimise the extent to which the minimum principle conditions of \eqref{eq:hamSystemeasier} are violated.
The extent to which the minimum principle conditions of \eqref{eq:hamSystemeasier} are violated can be determined online since $J_k ( \hat{\alpha}_k) = \hat{\alpha}_k' \mathcal{Q}_k \hat{\alpha}_k$ will be equal to zero when Assumption \ref{assumption:trueParameters} holds and greater than zero when it does not.
We note that if $J_k ( \hat{\alpha}_k)$ is large, the method is unlikely to yield useful parameters without modification of the dynamics or basis functions, or preprocessing the trajectories to remove noise as in \cite{Johnson2013}.

Methods for computing parameter vectors that minimises the extent to which the minimum principle conditions of \eqref{eq:hamSystemeasier} are violated have been previously proposed for problems with known finite horizons $K < \infty$ (cf.~\cite{Molloy2018,Keshavarz2011,Puydupin2012}).
The key novelty of our method \eqref{eq:method} is that it exploits our novel reformulation of the minimum principle conditions in Theorem \ref{theorem:linearSystem} to yield parameter vectors online without any prior knowledge of the (potentially infinite) horizon $K$ and without storing or processing batches of states and controls.
Furthermore, unlike the existing methods of \cite{Molloy2018,Keshavarz2011,Puydupin2012}, our method handles state and control trajectories with active control constraints.
We shall next establish conditions under which our method computes a unique parameter vector, and we will further describe its online implementation.
%that correspond to the true parameters $\theta^*$ when Assumptions \ref{assumption:trueParameters} and \ref{assumption:convex} hold.

\section{Performance Guarantees and Online Implementation}
\label{sec:performance}

In this section, we present our main result guaranteeing the uniqueness of solutions to our method \eqref{eq:method}. We also describe its online implementation.

\subsection{Performance Guarantees}

To establish our main performance result, let us define $\bar{\mathcal{Q}}_k \in \mathbb{R}^{(n + N -1) \times (n + N -1)}$ as 
% \begin{align*}
% \bar{\mathcal{Q}}_k \triangleq \begin{bmatrix}
%  \mathcal{Q}_k^{2,2} && \ldots && \mathcal{Q}_k^{2,N + n} \\
%  \vdots && \ddots && \vdots \\
%   \mathcal{Q}_k^{N + n,2} && \ldots && \mathcal{Q}_k^{N + n,N + n} \\
% \end{bmatrix}
% \end{align*}
the principal submatrix of $\mathcal{Q}_k$ formed by removing its first row and first column, and let us also define
$
q_k \in \mathbb{R}^{n + N - 1}
$ as the first column of $\mathcal{Q}_k$ without its first element.
%Let us also consider the parameter set $\Theta = \{ \theta \in \mathbb{R}^N : \theta^1 = a \}$ for any $a > 0$ to exclude the trivial solution $\hat{\theta}_k = \mathcal{I}\hat{\alpha}_k = 0$ and to ensure unique scaling.
We now present our main result and performance guarantee.

\begin{theorem}
\label{theorem:conElement}
Consider $(x_k,u_k)$ for $k \geq 0$, suppose that Assumption \ref{assumption:invertable} holds, and let $\Theta = \{ \theta \in \mathbb{R}^N : \theta^1 = a\}$ for some $a > 0$.
For any $k \geq 0$, if $\bar{\mathcal{Q}}_k$ has full rank then the unique solution to \eqref{eq:method} is
\begin{align}
 \label{eq:alphaConElement}
 \hat{\alpha}_k
 &=a \begin{bmatrix}
        1\\ 
        -\bar{\mathcal{Q}}_k^{-1}q_k
     \end{bmatrix}.
\end{align}
If, in addition, Assumption \ref{assumption:trueParameters} holds and there exists an $r > 0$ such that $r \theta^* \in \Theta$, then the unique solution to \eqref{eq:method} is
\begin{align*}
 \hat{\alpha}_k
 =a \begin{bmatrix} 
    1\\ 
     -\bar{\mathcal{Q}}_k^{-1}q_k 
    \end{bmatrix}
 = r\begin{bmatrix}
 \theta^*\\
 \lambda_0
 \end{bmatrix}.
\end{align*}
\end{theorem}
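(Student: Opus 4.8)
The plan is to reduce the constrained program \eqref{eq:method} to an unconstrained strictly convex quadratic in the free entries of $\alpha$. Writing $\alpha = [\,\theta^1 \;\; \beta'\,]'$ with $\beta \in \mathbb{R}^{n+N-1}$ collecting the remaining components, the constraint $\mathcal{I}\alpha \in \Theta$ simply fixes $\theta^1 = a$ and leaves $\beta$ unconstrained. Recalling the definitions of $q_k$ (first column of $\mathcal{Q}_k$ without its first element) and $\bar{\mathcal{Q}}_k$ (principal submatrix of $\mathcal{Q}_k$ with first row and column removed), the objective expands as
\begin{equation*}
 J_k(\alpha) = \alpha' \mathcal{Q}_k \alpha = c_k + 2a\, q_k' \beta + \beta' \bar{\mathcal{Q}}_k \beta
\end{equation*}
for a constant $c_k$ independent of $\beta$. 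Since $\mathcal{Q}_k \succeq 0$, every principal submatrix is positive semidefinite, so $\bar{\mathcal{Q}}_k \succeq 0$; under the full-rank hypothesis $\bar{\mathcal{Q}}_k$ is therefore positive definite, making $J_k$ strictly convex in $\beta$. Its unique stationary point is obtained from $\nabla_\beta J_k = 2a\, q_k + 2\bar{\mathcal{Q}}_k \beta = 0$, i.e.\ $\beta = -a\,\bar{\mathcal{Q}}_k^{-1} q_k$; reassembling $\alpha$ gives \eqref{eq:alphaConElement}, and strict convexity gives uniqueness.

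For the second statement, I would appeal to Theorem \ref{theorem:linearSystem}: under Assumptions \ref{assumption:trueParameters} and \ref{assumption:invertable}, setting $\alpha^* \triangleq [\,\theta^{*\prime}\;\; \lambda_0'\,]'$ we have $F_\ell \mathcal{G}_\ell \alpha^* = 0$ for every $\ell \in \mathcal{K}_K$, hence for every $\ell \in \mathcal{K}_k \subseteq \mathcal{K}_K$, and therefore $\mathcal{Q}_k \alpha^* = 0$ and $J_k(\alpha^*) = 0$. By homogeneity of degree two, $J_k(r\alpha^*) = r^2 J_k(\alpha^*) = 0$ for the $r > 0$ with $r\theta^* \in \Theta$, and since $\mathcal{Q}_k \succeq 0$ the value $0$ is the global minimum of \eqref{eq:method}. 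Because $\mathcal{I}(r\alpha^*) = r\theta^* \in \Theta$, the point $r\alpha^*$ is feasible and optimal; the uniqueness established in the first part then forces $\hat{\alpha}_k = r\alpha^* = r[\,\theta^{*\prime}\;\;\lambda_0'\,]'$, which together with \eqref{eq:alphaConElement} yields the displayed chain of equalities (and incidentally pins down $r = a/(\theta^*)^1$).

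The only genuinely nontrivial step is the reduction in the first paragraph, and in particular recognising that "$\bar{\mathcal{Q}}_k$ has full rank" is exactly the condition that the Hessian of $J_k$ restricted to the feasible affine subspace $\{\alpha : \theta^1 = a\}$ is positive definite — equivalently, that the quadratic in $\beta$ is strictly convex with a unique unconstrained minimiser. Once that is in place, the closed-form \eqref{eq:alphaConElement} follows from a routine first-order condition, and the second part is a short scaling argument built on Theorem \ref{theorem:linearSystem}; I anticipate no further obstacles.
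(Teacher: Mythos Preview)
Your argument is correct and lands on exactly the same normal equation $\bar{\mathcal{Q}}_k\beta = -a\,q_k$ that the paper derives, and your second part matches the paper's almost verbatim (invoke Theorem~\ref{theorem:linearSystem}, scale, use uniqueness).

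The only methodological difference worth noting is in the first part: the paper introduces a Lagrangian $\mathcal{L}(\alpha,p)=\alpha'\mathcal{Q}_k\alpha+p(\alpha^1-a)$ and solves the first-order KKT system, whereas you simply substitute the affine constraint $\alpha^1=a$ and minimise the resulting unconstrained quadratic in $\beta$. Your route is slightly more elementary and, importantly, makes the uniqueness claim explicit --- you observe that $\bar{\mathcal{Q}}_k\succeq 0$ as a principal submatrix of $\mathcal{Q}_k\succeq 0$, so full rank upgrades it to positive definite and the reduced objective is strictly convex. The paper's Lagrangian derivation stops at the stationary-point equation and does not spell out why that stationary point is the unique minimiser; your strict-convexity remark fills that small gap cleanly.
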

\begin{proof}
Under Assumption \ref{assumption:invertable} and given $\Theta$, the Lagrangian function for \eqref{eq:method} for any $k \geq 0$ is 
\begin{align*}
 \mathcal{L}(\alpha,p) = \alpha' \mathcal{Q}_k \alpha + p (\alpha^1 - a)
\end{align*}
with the Lagrange multiplier $p \in \mathbb{R}$.
Noting that $\mathcal{Q}_k$ is symmetric, and letting $c \triangleq p/2$, we have the derivatives 
\begin{align*}
 \dfrac{d\mathcal{L}(\alpha,p)}{dp}
 &= \alpha^1 - a
\text{ and }
    \dfrac{d\mathcal{L}(\alpha,p)}{d\alpha} = 2 \mathcal{Q}_k \alpha + 2c e_1
\end{align*}
where $e_1 \in \mathbb{R}^{N + n}$ is an indicator vector with $1$ in its first component and zeros elsewhere.
Following the method of Lagrange multipliers, setting $d\mathcal{L}(\hat{\alpha}_k,p)/dp = 0$ leads to $\hat{\alpha}_k^1 = a$, and setting $d\mathcal{L}(\hat{\alpha}_k,p)/d\alpha = 0$ whilst noting that $\mathcal{Q}_k$ is symmetric leads to the system
\begin{align*}
\begin{bmatrix}
 \mathcal{Q}_k^{1,1}&  q_k' \\
 q_k & \bar{\mathcal{Q}}_k
\end{bmatrix}
\begin{bmatrix}
a \\
\bar{\alpha}_k
\end{bmatrix} = -c e_1
\end{align*}
where 
$
\bar{\alpha}_k \triangleq [ \hat{\alpha}_k^2 \; \cdots \; \hat{\alpha}_k^{N + n}]'
$ and $\mathcal{Q}_k^{1,1}$ is the first element of $\mathcal{Q}_k$.
Since $a$ is known, we may discard the first row of both sides and perform straightforward matrix manipulations to obtain
%\begin{align*}
%\begin{bmatrix}
% {\mathcal{Q}}_k^{2,1}& {\mathcal{Q}}_k^{2,2}& \ldots &   {\mathcal{Q}}_k^{2,N + n} \\
%\vdots& \vdots& \vdots&\vdots \\
% {\mathcal{Q}}_k^{N + n,1}&  {\mathcal{Q}}_k^{N + n,2}& \ldots &  {\mathcal{Q}}_k^{N + n,N + n}
% \end{bmatrix}
%\begin{bmatrix} a \\
%\hat{\alpha}_k^2\\
%\vdots\\
%\hat{\alpha}_k^{n + N}
%\end{bmatrix} = 0
%\end{align*}
%which, in terms of the row-column deleted matrix $\bar{\mathcal{Q}}_k$, is
the equation $
\bar{\mathcal{Q}}_k \bar{\alpha}_k+ a q_k = 0.
$
Since $\bar{\mathcal{Q}}_k $ is invertible under the theorem conditions, we have
$
\bar{\alpha}_k 
= - a \bar{\mathcal{Q}}_k^{-1}  q_k
$ and the first theorem result follows.

To prove the second theorem result we note that \eqref{eq:hamSystemeasier} established in Theorem \ref{theorem:linearSystem} under Assumptions \ref{assumption:trueParameters} \& \ref{assumption:invertable} holds with $\alpha = \alpha^* = r [\theta^{*\prime} \; \lambda_0']'$ for all $r > 0$.
Thus, under Assumptions \ref{assumption:trueParameters} \& \ref{assumption:invertable}, $\hat{\alpha}  = \alpha^*$ is a solution to \eqref{eq:method}.
Recalling the first theorem assertion, $\hat{\alpha} = \alpha^*$ will be the unique solution when $\bar{\mathcal{Q}}_k$ is full rank and $r > 0$ is such that $\theta^* \in \Theta$.
The proof is complete.
\end{proof}

The first assertion of Theorem \ref{theorem:conElement} establishes that by constraining $\hat{\theta}^1 = a$, our method \eqref{eq:method} is guaranteed to compute a unique parameter vector given by 
\begin{align}
    \label{eq:parameters}
    \hat{\theta}_k
    &= a \mathcal{I} \begin{bmatrix} 
    1\\ 
     -\bar{\mathcal{Q}}_k^{-1}q_k 
    \end{bmatrix}
\end{align}
when the submatrix $\bar{\mathcal{Q}}_k$ has full rank.
The second assertion of Theorem \ref{theorem:conElement} establishes that when Assumption \ref{assumption:trueParameters} holds, the unique parameter vector will correspond to a scaled version of the true unknown parameter vector $\theta^*$ provided that the true parameter vector can be re-scaled by $r > 0$ to belong to the set $\Theta$.
This scaling condition is typically not restrictive but may require the permutation of the basis functions $L_k$ (in practice via trial and error) to avoid the first element of $\theta^*$ being zero.

The first assertion of Theorem \ref{theorem:conElement} holds without Assumption \ref{assumption:trueParameters} and its rank condition involving $\bar{\mathcal{Q}}_k$ is always useful for determining if the sequence of state and control pairs $(x_\ell,u_\ell)$ for $0 \leq \ell \leq k$ provide sufficient information for the parameter vector and initial costate vector to be computed uniquely with our method (regardless of Assumption \ref{assumption:trueParameters}).
The rank condition therefore fulfils a role analogous to the persistence of excitation conditions that appear in adaptive control and parameter estimation in which data or experiments are assessed as sufficient or not to yield unambiguous parameter estimates (without considering the properties of the estimates themselves).
A similar rank condition is established for finite-horizon inverse optimal control without control constraints in \cite{Molloy2018,Molloy2016} for offline methods that involve storing and processing the full sequences $\x{0}{K}$ and $\u{0}{K}$.
Our rank condition of Theorem \ref{theorem:conElement} will fail to hold when the system dynamics or initial conditions are degenerate and lead to uninformative state and control trajectories (e.g., trajectories that are equilibria of the dynamics).
It will also fail to hold when too few state and control pairs are used to construct $\bar{\mathcal{Q}}_k$ (e.g., due to a short horizon $K$ or $\mathcal{K}_k$ having too few elements because $k$ is small or the control constraints being active too frequently).
If $\bar{\mathcal{Q}}_k$ is rank deficient, it is therefore advantageous in practice to wait for more state and control pairs to be processed before seeking to compute a unique parameter vector $\theta$.
The following proposition reinforces this intuition by showing that the rank of $\bar{\mathcal{Q}}_k$ is non-decreasing as more state and control pairs are processed.

\begin{proposition}
\label{proposition:rank}
Consider $(x_k,u_k)$ for $k \geq 0$ and suppose that Assumption \ref{assumption:invertable} holds.
Then,
\begin{align*}
    \rank \left( \bar{\mathcal{Q}}_{k-1} \right)
    \leq \rank \left( \bar{\mathcal{Q}}_k \right) 
\end{align*}
for all $k \geq 1$.
\end{proposition}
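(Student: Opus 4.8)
The plan is to exploit the additive structure of the recursion \eqref{eq:qRecursion} together with two elementary facts about symmetric positive semidefinite matrices: that a principal submatrix of a positive semidefinite matrix is itself positive semidefinite, and that adding a positive semidefinite matrix cannot decrease the rank. First I would observe that, regardless of whether $u_k \in \interior \mathcal{U}$, the recursion \eqref{eq:qRecursion} can be written uniformly as $\mathcal{Q}_k = \mathcal{Q}_{k-1} + M_k$, where $M_k \triangleq (F_k \mathcal{G}_k)'(F_k \mathcal{G}_k)$ when $u_k \in \interior \mathcal{U}$ and $M_k \triangleq 0$ otherwise; in either case $M_k$ is symmetric positive semidefinite. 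Since forming the principal submatrix obtained by deleting the first row and column is a linear operation that commutes with matrix addition, it follows that $\bar{\mathcal{Q}}_k = \bar{\mathcal{Q}}_{k-1} + \bar{M}_k$, where $\bar{M}_k$ denotes the corresponding principal submatrix of $M_k$, and hence $\bar{M}_k$ is again symmetric positive semidefinite.

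The second step is to establish the auxiliary fact that, for any symmetric positive semidefinite matrices $A$ and $B$ of equal size, $\rank(A) \leq \rank(A + B)$. I would prove this via a null-space inclusion: if $(A+B)x = 0$ then $x'(A+B)x = x'Ax + x'Bx = 0$, and since both terms are nonnegative we obtain $x'Ax = 0$; writing $A = R'R$ this gives $\|Rx\|^2 = 0$, so $Rx = 0$ and thus $Ax = R'Rx = 0$. Hence $\ker(A+B) \subseteq \ker(A)$, so the nullity of $A+B$ is at most that of $A$, and the rank inequality follows from the rank--nullity theorem. Applying this with $A = \bar{\mathcal{Q}}_{k-1}$ and $B = \bar{M}_k$, and using $\bar{\mathcal{Q}}_k = \bar{\mathcal{Q}}_{k-1} + \bar{M}_k$, immediately yields $\rank(\bar{\mathcal{Q}}_{k-1}) \leq \rank(\bar{\mathcal{Q}}_k)$ for all $k \geq 1$, completing the argument.

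There is no substantial obstacle here; the only points requiring a little care are (i) checking that the recursion can be cast as the addition of a positive semidefinite term in \emph{both} branches of \eqref{eq:qRecursion}, and (ii) noting that the extraction of the principal submatrix commutes with matrix addition and preserves positive semidefiniteness, so that the auxiliary rank fact can be applied directly to $\bar{\mathcal{Q}}_k$ rather than to $\mathcal{Q}_k$. As an alternative, one could phrase the whole argument in terms of the nested column spaces $\mathrm{range}(\bar{\mathcal{Q}}_{k-1}) \subseteq \mathrm{range}(\bar{\mathcal{Q}}_k)$, but the null-space formulation is the cleanest route.
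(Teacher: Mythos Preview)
Your proposal is correct and follows essentially the same approach as the paper's proof: both arguments pass to principal submatrices to write $\bar{\mathcal{Q}}_k = \bar{\mathcal{Q}}_{k-1} + \bar{M}_k$ with $\bar{M}_k$ positive semidefinite, then use a quadratic-form/null-space inclusion together with the rank--nullity theorem. The only cosmetic differences are that you unify the two branches of \eqref{eq:qRecursion} by allowing $M_k = 0$ and make the factorisation $A = R'R$ explicit, whereas the paper treats the case $u_k \notin \interior\,\mathcal{U}$ separately and leaves the step from $v'\bar{\mathcal{Q}}_{k-1}v = 0$ to $\bar{\mathcal{Q}}_{k-1}v = 0$ implicit.
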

\begin{proof}
 Consider any $k \geq 1$ and note that $\bar{\mathcal{Q}}_{k}$ and $\bar{\mathcal{Q}}_{k-1}$ exist under Assumption \ref{assumption:invertable}.
 The proposition holds trivially when $u_k \not\in \interior \, \mathcal{U}$ since \eqref{eq:qRecursion} implies
 $
     \bar{\mathcal{Q}}_{k} = \bar{\mathcal{Q}}_{k-1}.
 $
 If $u_k \in \interior \, \mathcal{U}$, then \eqref{eq:qRecursion} implies that
 $
     \bar{\mathcal{Q}}_{k} = \bar{\mathcal{Q}}_{k-1} + \bar{\mathcal{F}}_k
 $
 where $\bar{\mathcal{F}}_k$ is the product matrix $(F_k\mathcal{G}_k)'(F_k\mathcal{G}_k)$ without its first row and first column. 
 By noting that $\bar{\mathcal{F}}_k$ is positive semidefinite, we have that
\begin{align*}
 v'\bar{\mathcal{Q}}_{k}v
 = v'\bar{\mathcal{Q}}_{k-1}v + v'\bar{\mathcal{F}}_kv
 \geq v'\bar{\mathcal{Q}}_{k-1}v
\end{align*}
for all $v \in \mathbb{R}^{n + N}$.
The null space of $\bar{\mathcal{Q}}_{k}$ is thus a subset of the null space of $\bar{\mathcal{Q}}_{k-1}$, and so
$
 \nullM(\bar{\mathcal{Q}}_{k-1}) \geq \nullM(\bar{\mathcal{Q}}_{k}).
$
The rank-nullity theorem then implies that
\begin{align*}
 \rank(\bar{\mathcal{Q}}_{k-1})
 &= \rank(\bar{\mathcal{Q}}_{k}) + \nullM(\bar{\mathcal{Q}}_{k}) - \nullM(\bar{\mathcal{Q}}_{k-1})\\
 &\leq \rank(\bar{\mathcal{Q}}_{k})
\end{align*}
and the proof is complete.
\end{proof}

An important consequence of Proposition \ref{proposition:rank} is that if the rank condition of Theorem \ref{theorem:conElement} is satisfied at any time $\ell \geq 0$, then it will also be satisfied for all subsequent times $k \geq \ell$.
Theorem \ref{theorem:conElement} and Proposition \ref{proposition:rank} together therefore imply that if $\bar{\mathcal{Q}}_\ell$ has full rank for any $\ell \geq 0$, then our method \eqref{eq:method} will have unique solutions \eqref{eq:alphaConElement} for all $k \geq \ell$ under the conditions of Theorem \ref{theorem:conElement}.

\subsection{Online Implementation}

In light of Theorem \ref{theorem:conElement}, our method \eqref{eq:method} can be implemented by computing $\mathcal{Q}_k$, $\mathcal{G}_k$, and $F_k$ before solving \eqref{eq:parameters}.
If $\bar{\mathcal{Q}}_k$ is rank deficient, then we may substitute the inverse of $\bar{\mathcal{Q}}_k$ in \eqref{eq:alphaConElement} with the Moore-Penrose pseudoinverse of $\bar{\mathcal{Q}}_k$ to yield the minimum-norm solution to \eqref{eq:method} or simply set $\hat{\theta}_k = 0$.
The online implementation of our method \eqref{eq:method} is summarised in Algorithm \ref{algorithm:method}.

%In light of Theorem \ref{theorem:conElement}, if $\bar{\mathcal{Q}}_k$ has full rank, our proposed method \eqref{eq:method} can be implemented by computing $\mathcal{Q}_k$ and $\mathcal{G}_k$ with the recursions \eqref{eq:gRecursion} and \eqref{eq:qRecursion}, and then solving \eqref{eq:alphaConElement}.
% If $\bar{\mathcal{Q}}_k$ is rank deficient, then we propose substituting the inverse of $\bar{\mathcal{Q}}_k$ in \eqref{eq:alphaConElement} with the Moore-Penrose pseudoinverse of $\bar{\mathcal{Q}}_k$ to compute the minimum norm solution to \eqref{eq:method}.
% Finally, in practice the recursions \eqref{eq:gRecursion} and \eqref{eq:qRecursion} may suffer from numeric issues due to growth in the magnitude of the elements of $\mathcal{Q}_k$ and $\mathcal{G}_k$.
% To overcome these numeric issues, $\mathcal{Q}_k$ and $\mathcal{G}_k$ may be normalised by their Frobenius norms $\| \mathcal{Q}_k \|$ and $\|\mathcal{G}_k\|$ at each time $k$ (in an approach similar to \cite{Jin2018}).

\begin{algorithm}[H]
\caption{Online Implementation of \eqref{eq:method}}
\label{algorithm:method}
\begin{algorithmic}[1]
\renewcommand{\algorithmicrequire}{\textbf{Input:}}
\renewcommand{\algorithmicensure}{\textbf{Output:}}
\REQUIRE States and controls $(x_k,u_k)$ for $k \geq 0$, dynamics $f_k$, basis functions $L_k$, constraint set $\mathcal{U}$, and parameter set $\Theta = \{ \theta \in \mathbb{R}^N : \theta^1 = a\}$.
\ENSURE Parameter vector $\hat{\theta}_k$ for $k \geq 1$.
 \FOR{$k = 0, 1, \ldots$}
    \STATE Receive $(x_k,u_k)$.
    \STATE Compute $G_k$ and $F_k$ with \eqref{eq:fMatrix} and \eqref{eq:gMatrix}.
    \IF {$k = 0$}
        \STATE Initialise $\mathcal{G}_0 = G_0$.
        \IF {$u_0 \in \interior \, \mathcal{U}$}
            \STATE Initialise $\mathcal{Q}_0 = (F_0\mathcal{G}_0)'(F_0\mathcal{G}_0)$.
        \ELSE
            \STATE Initialise $\mathcal{Q}_0 = 0$.
        \ENDIF
    \ELSE
        \STATE Compute $\mathcal{G}_k = G_k \times \mathcal{G}_{k-1}$.
        \IF {$u_k \in \interior \, \mathcal{U}$}
            \STATE Compute $\mathcal{Q}_k = \mathcal{Q}_{k-1} + (F_k\mathcal{G}_k)'(F_k\mathcal{G}_k)$.
        \ELSE
            \STATE Set $\mathcal{Q}_k = \mathcal{Q}_{k-1}$.
        \ENDIF
    \ENDIF
 \STATE Extract $\bar{\mathcal{Q}}_k$ and $q_k$ from $\mathcal{Q}_k$.
 \IF {$\rank (\bar{\mathcal{Q}}_k) = n + N - 1$}
    \STATE Compute unique $\hat{\theta}_k$ with \eqref{eq:parameters}.
 \ELSE
    \STATE Set $\hat{\theta}_k = 0$.
 \ENDIF
 \ENDFOR
\end{algorithmic} 
\end{algorithm}

The memory complexity of Algorithm \ref{algorithm:method} is dominated by the need to store the most recent $\mathcal{Q}_k$, $\mathcal{G}_k$, and $F_k$ which leads to a total memory complexity of $O(m(n + N) + (n + N)^2)$.
The computational complexity of Algorithm \ref{algorithm:method} is similarly dominated by the computation of $\mathcal{Q}_k$, $\mathcal{G}_k$, and the inversion of $\bar{\mathcal{Q}}_k$ in \eqref{eq:parameters} which leads to a computational complexity of $O(m(n+N)^2 + (n+N)^3)$ at each time $k$.
In contrast, the total memory complexity of the recently proposed finite-horizon inverse optimal control method of \cite{Molloy2018} is $O((m + n)NK)$ whilst its total computational complexity is $O(n^3K + (mNK)^3)$.
The horizon $K$ will typically be greater than the dimensions $m$, $n$ and $N$, and so the complexities of our method will typically be less than those of the method of \cite{Molloy2018} (and those of other methods, e.g. \cite{Keshavarz2011,Puydupin2012,Jin2018}).
Importantly from an online implementation perspective, the memory complexity of our method is independent of $K$ whilst the memory complexity of the method of \cite{Molloy2018} is linear in $K$.
The computational complexity of the method of \cite{Molloy2018} is also cubic in time $k$ whilst the total computational complexity of our method is only linear in time $k$.

% \begin{algorithm}[H]
% \caption{Proposed Online Method}
% \label{algorithm:method}
% \begin{algorithmic}[1]
% \renewcommand{\algorithmicrequire}{\textbf{Input:}}
% \renewcommand{\algorithmicensure}{\textbf{Output:}}
% \REQUIRE States and controls $(x_k,u_k)$ for $k \geq 0$, dynamics $f_k$, basis functions $L_k$, constraint set $\mathcal{U}$, and parameter set $\Theta$.
% \ENSURE Computed parameters $\hat{\theta}_k$ for $k \geq 1$.
%  \STATE Initialise $\mathcal{G}_0 = G_0$.
%  \IF {$u_0 \in \interior \, \mathcal{U}$}
%     \STATE Initialise $\mathcal{Q}_0 = (F_0\mathcal{G}_0)'(F_0\mathcal{G}_0)$.
%  \ELSE
%     \STATE Initialise $\mathcal{Q}_0 = 0$.
%  \ENDIF
%  \FOR{$k = 1, 2, \ldots$}
%     \STATE Compute $\mathcal{G}_k = G_k \times \mathcal{G}_{k-1}$.
%     \STATE Normalise $\mathcal{G}_k$.
%     \IF {$u_k \in \interior \, \mathcal{U}$}
%     \STATE Compute $\mathcal{Q}_k = \mathcal{Q}_{k-1} + (F_k\mathcal{G}_k)'(F_k\mathcal{G}_k)$.
%     \STATE Normalise $\mathcal{Q}_k$.
%  \ELSE
%     \STATE Set $\mathcal{Q}_k = \mathcal{Q}_{k-1}$.
%  \ENDIF
%  \STATE Compute $\bar{\mathcal{Q}}_k$ and $q_k$.
%  \IF {$\rank (\bar{\mathcal{Q}}_k) = n + N - 1$}
%     \STATE Compute unique $\hat{\theta}_k$ from \eqref{eq:parameters}.
%  \ELSE
%     \STATE Compute $\hat{\theta}_k$ from \eqref{eq:parameters} with $\bar{\mathcal{Q}}_k^{-1}$ replaced by Moore-Penrose pseudoinverse.
%  \ENDIF
%  \ENDFOR
% \end{algorithmic} 
% \end{algorithm}

\section{Simulation Examples}
\label{sec:examples}

In this section, we first illustrate our method in a simple illustrative example alongside the current state-of-the-art method of \cite{Molloy2018}.
We then consider an application-inspired example that cannot be solved with the method of \cite{Molloy2018} due to the presence of control constraints. 

\subsection{Illustrative Example}
Consider the single integrator
$
    x_{k+1}
    = x_k + u_k
$ with $x_k \in \mathbb{R}$ for $0 \leq k \leq K$ regulated with an optimal controller designed with the objective function
\begin{align*}
    %\label{eq:exampleObjective}
    V \left( \x{0}{K}, \u{0}{K}, \theta \right)
    &= \sum_{k = 0}^K (x_k)^2 + 5 (u_k)^2.
\end{align*}
The parameter vector and basis functions of this objective function are $\theta = \theta^* = [1 \; 5]'$ and $L_k = [(x_k)^2 \; (u_k)^2]$, respectively.
Thus, $\nabla_u L_k = [0 \; 2u_k]$, $\nabla_x L_k = [2x_k \; 0]$, and Assumption \ref{assumption:invertable} holds with $\nabla_x f_k^{-1} = 1$.

For the purpose of illustration, we simulated the optimal state and control trajectories with $K = 10$ and $x_0 = 10$ shown in Fig.~\ref{fig:exampleOptimal} and applied our method \eqref{eq:method} by following Algorithm \ref{algorithm:method}.
Specifically, at $k = 0$, we receive $(x_0,u_0) = (10.0,-3.58)$ and so computing $G_0$ and $F_0$, and initialising $\mathcal{G}_0$ and $\mathcal{Q}_0$, leads to
\begin{equation*}
    \bar{\mathcal{Q}}_0
    = \begin{bmatrix}
        51.2820 &  -7.1611\\
        -7.1611 &   1.0000
        \end{bmatrix}.
\end{equation*}
Here, $\bar{\mathcal{Q}}_0$ is rank deficient and so there is no unique solution to \eqref{eq:method}.
Thus, we set $\hat{\theta}_0 = 0$ and proceed to $k = 1$.
At $k = 1$, we receive $(x_1,u_1) = (6.42, -2.30)$ and so computing $G_1$, $F_1$, $\mathcal{G}_1$, and $\mathcal{Q}_1$ leads to
\begin{equation*}
    \bar{\mathcal{Q}}_1
    = \begin{bmatrix}
            72.3810  & -11.7545\\
            -11.7545 &  2.0000
        \end{bmatrix}
\end{equation*}
which is full rank.
Thus, noting that $q_k = [294.1 \; -52.8]'$, the solution of \eqref{eq:parameters} yields the unique parameter vector $\hat{\theta}_1 = [1 \; 5]'$ solving \eqref{eq:method}.
Our method \eqref{eq:method} therefore yields the unknown parameter vector $\theta^*$ online from only two pairs of states and controls, $(x_0,u_0)$ and $(x_1,u_1)$, without knowledge of the horizon $K$, and in a time of $1.2$ ms with our MATLAB implementation.
For comparison, we also processed the trajectories of Fig.~\ref{fig:exampleOptimal} with the state-of-the-art method of \cite{Molloy2018} which must process the entire trajectories $\x{0}{10}$ and $\u{0}{10}$, and requires knowledge of the horizon $K = 10$.
Whilst the method of \cite{Molloy2018} also computed the unknown parameter vector $\theta^*$, it took $3.4$ ms in our MATLAB implementation (over two times slower than our online method on this short horizon with the same computer system).

\begin{figure}
    \centering
    \includegraphics[width=0.99\columnwidth]{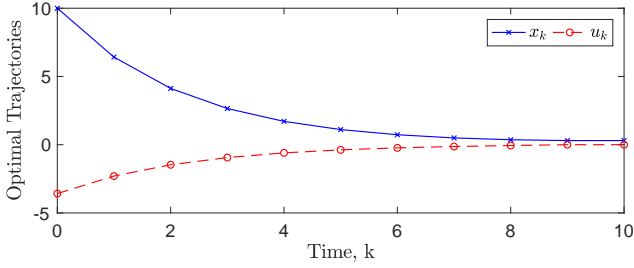}
    \caption{Simulated trajectories for illustrative example.}
    \label{fig:exampleOptimal}
\end{figure}

\subsection{Application-Inspired Example}
We now consider an example inspired by the study of how human pilots fly aircraft (cf.~\cite{Maillot2013}), and how pilot behaviours can be modelled and mimicked with optimal control problems.
We specifically consider the regulation of pitch in a fixed-wing aircraft.
Let us therefore consider a discrete-time model of aircraft pitch dynamics\footnote{http://ctms.engin.umich.edu/CTMS/index.php} 
\begin{align*}
    x_{k+1}
    &= Ax_k + Bu_k, \; x_0 \in \mathbb{R}^3
\end{align*}
for $k \geq 0$ where the three states $x_k^1, x_k^2,$ and $x_k^3$ are the angle of attack (in radians), the aircraft pitch rate (in radians per second), and the aircraft pitch angle (in radians), respectively, and
\begin{align*}
    A
    &= \begin{bmatrix}
        0.9654  & 5.4572 & 0\\
        -0.0013 & 0.9545 & 0\\
        -0.0038 & 5.5437 & 1
    \end{bmatrix}
    \text{ and }
    B
    = \begin{bmatrix}
        0.0284 & 0.0142\\
        0.0020 & 0.0010\\
        0.0056 & 0.0028
    \end{bmatrix}.
\end{align*}
Assumption \ref{assumption:invertable} holds with $\nabla_x f_k^{-1} = (A')^{-1}$.

The control input vector $u_k = [u_k^1 \; u_k^2]' \in \mathbb{R}^2$ consists of two components, the first $u_k^1$ being the deflection angle of the elevator control surface (in radians) and the second $u_k^2$ being the deflection angle of a second (smaller trim-tab) elevator control surface (in radians).
The control inputs are both constrained to the set $\mathcal{U} = \{ u = [u^1 \; u^2]' \in \mathbb{R}^2 : -\Delta \leq u^1,u^2 \leq \Delta \}$ for some constraint-magnitude $\Delta > 0$.
In an experimental setting, the controls $u_k$ would be provided by human test subjects. 
However, for the purpose of illustrating our method, we simulated the system from an initial state of $x_0 = [ 0.5 \; 0 \; 0.2]'$ and a constraint-magnitude of {$\Delta = 0.09$} after it was regulated with an optimal controller designed with
\begin{align*}
    %\label{eq:exampleObjective}
    &V \left( \x{0}{K}, \u{0}{K}, \theta \right)\\
    &\quad= \sum_{k = 0}^K x_k' \begin{bmatrix}
    1 && 0 && 0\\
    0 && 4 && 0\\
    0 && 0 && 2
    \end{bmatrix} x_k +  u_k' \begin{bmatrix}
    3 && 0\\
    0 && 6
    \end{bmatrix} u_k.
\end{align*}
Thus, our aim in this example is to recover the parameter vector $\theta = \theta^* = [1 \; 4 \; 2 \; 3 \; 6]'$ (the diagonal elements of the state and control weighting matrices) without knowledge of the horizon (which we simulated as {$K = 250$}).
The simulated (optimal) state and control trajectories are shown in Fig.~\ref{fig:optimal} for $k \leq 100$.
We see that the control constraints are active in the time interval {$k \in [6,33]$}.

We applied our method to the trajectories in Fig.~\ref{fig:optimal} using Algorithm \ref{algorithm:method}.
The unique parameter vector computed by our method for $k \leq 100$ are shown in Fig.~\ref{fig:computed}, and these correspond to the true parameter vector $\theta^*$ for {$k \geq 35$}.
Prior to {$k = 35$}, the parameter vector are not computed due to $\bar{\mathcal{Q}}_k$ being singular (or numerically close to singular) and the control constraints being active for {$k \in [6,33]$}.

\begin{figure}[!t]
 \centering
 \includegraphics[width=0.99\columnwidth]{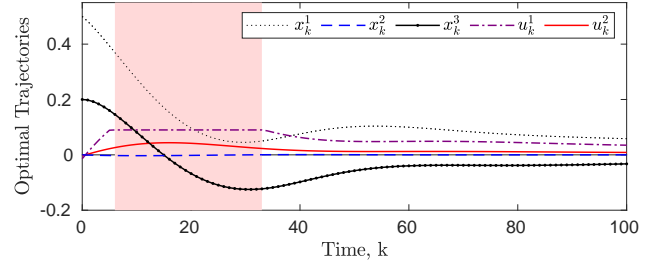}
 \caption{Simulated trajectories for application-inspired example. The control constraints are active in the shaded region.}
 \label{fig:optimal}
\end{figure}

\begin{figure}[!t]
 \centering
 \includegraphics[width=0.99\columnwidth]{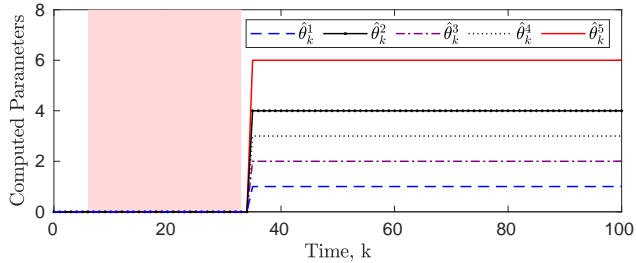}
 \caption{Objective-function parameter vector $\hat{\theta}_k$ calculated from the trajectories of Fig.~\ref{fig:optimal} using our method \eqref{eq:method}. The true parameter vector is $\theta = \theta^* = [1 \; 4 \; 2 \; 3 \; 6]'$. The control constraints are active in the shaded region.}
 \label{fig:computed}
\end{figure}

To study the impact of control constraints on the number of time steps before unique parameter vector can be computed with our method, we simulated optimal trajectories from an initial state of $x_0 = [0.5 \; 0 \; 0.2]'$ with a horizon of  {$K = 250$} and constraint magnitudes of between  {$\Delta = 0.07$ and $\Delta = 0.11$}.
We applied our method to each of these trajectories. 
For comparison purposes, we also applied an ad-hoc version of our method in which we wait until after the constraints are active for the last time before initialising the recursions for $\mathcal{Q}_k$ and $\mathcal{G}_k$. 
That is, the ad-hoc version method does not process any states and controls when the constraints are active.
In contrast, our proposed method \eqref{eq:method} processes states and controls in the recursion \eqref{eq:gRecursion} for $\mathcal{G}_k$ but not in the recursion \eqref{eq:qRecursion} for $\mathcal{Q}_k$ when the constraints are active.

Fig.~\ref{fig:delay} reports the first time at which unique parameter vector can be computed with both methods versus the duration of time the control constraints were active.
The duration of time that the control constraints are active corresponds directly to the constraint magnitude (i.e., the constraints are active for $19$ time steps when $\Delta = 0.11$ compared to $40$ time steps when $\Delta = 0.07$).
From Fig.~\ref{fig:delay}, we see that the time required by both methods to compute unique parameter vector increases with the length of time the constraints are active.
However, our proposed method \eqref{eq:method} uniformly computes the unique parameter vector in fewer time steps than the ad-hoc method.
The processing of the state and control pairs in our proposed method \eqref{eq:method} with the recursion for $\mathcal{G}_k$ while the control constraints are active is therefore advantageous (despite our method not computing new values of $\mathcal{Q}_k$ or $\hat{\theta}_k$ when the constraints are active).

\begin{figure}[!t]
 \centering
 \includegraphics[width=1\columnwidth]{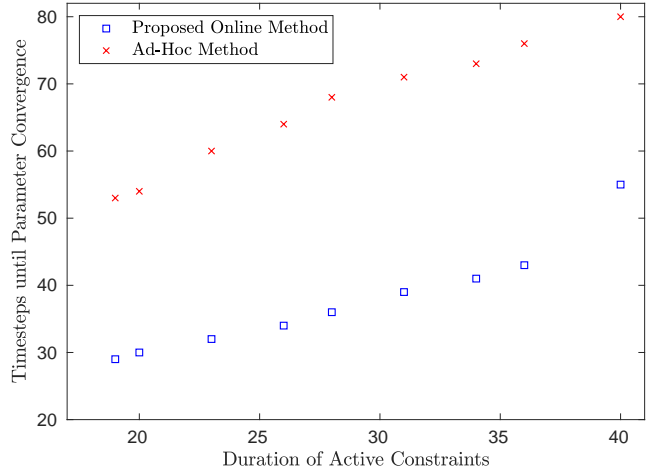}
 \caption{Time before unique objective-function parameter vector is computed versus duration of time the control constraints are active in our application-inspired example.}
 \label{fig:delay}
\end{figure}

% \begin{figure}[!t]
%  \centering
%  \includegraphics[width=0.99\columnwidth]{conNorm.eps}
%  \caption{Estimated delay ratios \eqref{eq:delayRatio} against mean of change-type $j = 1$ post-change distribution $\nu_1 = \mathcal{N} \left( \cdot, 1\right)$ for an estimated mean time to false alarm or false isolation of $5000$ with $\nu_0 = \mathcal{N} (0,1)$.}
%  \label{fig:delayRatios}
% \end{figure}

\section{Conclusion}
\label{sec:conclusion}
We consider the problem of online inverse optimal control on possibly infinite horizons in discrete-time systems subject to control constraints.
We exploit both finite and infinite horizon discrete-time minimum principles to propose a novel online inverse optimal control method and to establish novel conditions under which it is guaranteed to compute a unique objective-function parameter vector.
We illustrate our method in simulations and demonstrate that it is able to compute unique parameter vectors online from trajectories with constrained controls.
%\begin{ack}                               % Place acknowledgements
%
%Computational resources and services used in this paper were provided by the HPC and Research Support Group, Queensland University of Technology, Brisbane, Australia.
%\end{ack}

\bibliography{Library}
%\bibliography{C:/Users/Tim/Dropbox/Research/Library}
%\bibliography{IEEEabrv,C:/Users/molloyt/Dropbox/Research/Library}
                                 % bibliography (preferred). The
                                 % correct style is generated by
                                 % Elsevier at the time of printing.

\appendix
%\section{First Appendix}    % Each appendix must have a short title.
					% Sections and subsections are supported  
                                        	% in the appendices.
\end{document}